\numberwithin{equation}{section}
\theoremstyle{plain}
\newtheorem{theorem}{Theorem}
\newtheorem{lemma}[theorem]{Lemma}
\newtheorem{proposition}[theorem]{Proposition}
\theoremstyle{definition}
\newcommand{\DeltaB}{\boldsymbol{\Delta}}
\newcommand{\norm}[1]{\left\lVert#1\right\rVert}
\DeclareMathOperator{\conv}{conv}
\DeclareMathOperator{\vol}{vol}
\DeclareMathOperator{\dist}{dist}
\DeclareMathOperator{\relint}{relint}
\DeclareMathOperator{\aff}{aff}
\title[Intrinsic volumes of the quantum state space and MUBs]{Intrinsic volumes of the quantum state space and mutually unbiased bases}
\author[Zs.\ Szil\'agyi]{Zsombor Szil\'agyi}
\address{
MTA-BME Lendület “Momentum” Quantum Information Theory Research Group and\\
Department of Analysis and Operations Research, Institute of Mathematics and\\
Doctoral School of Mathematics and Computer Science, Budapest University of Technology and Economics, Műegyetem rkp. 3., H-1111 Budapest, Hungary and\\
Institute for Natural Sciences and Basic Subjects, Bánki Donát Faculty of Mechanical and Safety Engineering, Óbuda University, Bécsi út 96/B, 1034 Budapest, Hungary
}
\email{zsombor.szilagyi@gmail.com}
\author[M.\ Weiner]{Mih\'aly Weiner}
\address{
Department of Analysis and Operations Research, Institute of Mathematics,
Budapest University of Technology and Economics M\H{u}egyetem rkp. 3--9 H-1111 
and MTA-BME Lend\"ulet ``Momentum'' Quantum Information Theory Research Group}
\email{mweiner@math.bme.hu}
\thanks{Sz.Zs. and M.W. are supported by the Ministry of Culture and Innovation and the National Research, Development and Innovation Office within the Quantum Information National Laboratory of Hungary (Grant No. 2022-2.1.1-NL-2022-00004). M.W. is also supported by the NRDI grant K132097.}
\subjclass[2010]{}
\keywords{}
\begin{document}

\begin{abstract}
Previous studies on the geometrical properties of the state space of a finite-level quantum system have determined its volume and surface area. Building on this foundation, we derive explicit formulas for two additional intrinsic volume quantities.

The question of whether a complete set of mutually unbiased bases exists in dimension $d$ can be equivalently framed as whether a specific convex polytope can be inscribed within the state space of a $d$-level quantum system. One motivation for our work was the hypothesis that a smaller intrinsic volume of the state space compared to the corresponding intrinsic volume of the mentioned polytope could rule out such an inscription. While our computations of these two intrinsic volumes do not lead to this conclusion, they nonetheless provide fundamental insights into the geometric structure of quantum state spaces. In particular, we show that these quantities can be used to rule out the existence of some unit-vector ``configurations'' 
(though not the one formed by the bases vectors of a complete set of mutually unbiased bases). 
\end{abstract}

\maketitle

\section{Introduction}\label{sec:intro}

Finding collections of unit vectors whose pairwise inner products have prescribed moduli is a recurrent theme across mathematics, physics, and computer science. Often the same algebraic condition is studied under different names by different communities. In the real Euclidean space~$\mathbb{R}^{d}$, for instance, the search for $2d$ unit vectors ${\psi_{(r,j)}} (r\in\{1,2\},\, j\in [d]\equiv \{1,\ldots d\})$ satisfying
$$
|\langle \psi_{(r,j)},\psi_{(s,k)}\rangle|^2=\left\{\begin{matrix}
 0, & \textrm{ if } r = s\\
 \frac{1}{d}, & \textrm{otherwise},
\end{matrix}\right.
$$
for all $(r,j)\neq (s,k)$, is simply the problem of constructing a real $d\times d$ Hadamard matrix; the entries of the matrix are given by $\sqrt{d}\,\langle\psi_{(1,j)},\psi_{(2,k)}\rangle$. The celebrated Hadamard conjecture asks whether such a matrix exists for every $d$ divisible by~$4$, and remains open despite a century of progress in combinatorial design theory~\cite{Hadamard1893,Horadam2007}.

Quantum information theory motivates the complex analogue of the same question. Throughout this paper the $d$-dimensional Hilbert space of a finite-level quantum system will be identified with~$\mathbb{C}^{d}$, endowed with the usual Hermitian inner product. While unit vectors of this space play a special role in modeling a finite-level quantum system, direct physical meaning is attached not to a unit vector $\psi \in \mathbb{C}^d $ itself, but rather to the rank one orthogonal projector $P=|\psi\rangle\langle\psi|$ projecting to the {\it ray} (i.e.\ one-dimensional subspace) spanned by $\psi$. In particular, for unit vectors $\psi,\psi'$, the quantity $|\langle \psi, \psi'\rangle|^2 = {\rm Tr}(P P')$ 
(where $P=|\psi\rangle\langle\psi|$ and $P'=|\psi'\rangle\langle\psi'|$) has operational meaning, whereas the complex phase of $\langle\psi,\psi'\rangle$ does not.

Quantum information theory has thus brought renewed interest to the existence of vector systems with specified moduli of pairwise inner products, in particular in two famous cases. Though considerably effort is invested by the scientific community, none of them has been fully clarified to this day. The mentioned cases are:

\begin{itemize}
\item[i)] \textbf{Symmetric, informationally complete POVMs (SIC‑POVMs).} Although originally considered as a collection of positive semidefinite matrices, in order to give a so-called SIC‑POVM in dimension~$d$, one needs to find $d^{2}$ unit vectors $\psi_1,\ldots \psi_{d^2}\in \mathbb{C}^d$ satisfying
$|\langle \psi_j, \psi_k\rangle|^2=\frac{1}{d+1}$ for all $j\neq k$. Apart from applications in quantum information theory, -- see e.g.\ \cite{fuchsSIC} and the references therein -- they have even appeared in topics like  compressed sensing in radar \cite{hermanStrohmerRadar}. Identifying whether SIC‑POVMs exist in all dimensions is a long‑standing open problem dating back to Zauner’s 1999 thesis~\cite{Zauner1999}. Exact constructions are now known up to $d=53$ and sporadically beyond, while high‑precision numerical solutions suggest existence at least up to $d=193$ \cite{Appleby2018,Grassl2020}; see more in the introduction of the recent work \cite{ApplebyFlammiaKopp2025}. The prevailing conjecture is that SIC‑POVMs exist for every $d$.

\item[ii)] {\bf Complete system of mutually unbiased bases (compl.\ MUB syst.). } For a complete $d$-dimensional MUB system one needs $d(d+1)$ unit vectors $\psi_{r,j}\in \mathbb{C}^d\, (r=1,\ldots d+1,\, j=1,\ldots d)$ forming $d+1$ mutually unbiased orthonormal bases; i.e.\ such that 
$$
|\langle \psi_{(r,j)},\psi_{(s,k)}\rangle|^2=\left\{\begin{matrix}
 0, & \textrm{ if } r = s\\
 \frac{1}{d}, & \textrm{otherwise},
\end{matrix}\right.
$$
for all $(r,j)\neq (s,k)$. Easy arguments show that more than $d+1$ MUBs cannot exists and in fact the number of bases in a collection of MUBs is $d+1$ if and only if the rank one projectors corresponding to the bases vectors 
span the set $M_d(\mathbb{C})$ of $d\times d$ matrices (which is why such a system is referred to as a {\it complete} one).

MUBs arise naturally in several quantum information protocols (e.g.\ in dense coding, teleportation, entanglement swapping, covariant cloning, and quantum
state tomography) and have been the subject of extensive investigation both from purely mathematical and quantum informational perspectives; see e.g.\ the review \cite{DurtEnglertBengtssonZyczkowski}. While exhibiting a triplet of MUBs is easy in every dimension $d\geq 2$, complete systems are only constructed in prime-power dimensions \cite{Ivanovic,WoottersFields}, with the maximal number of MUBs being unknown in all other dimensions. Numerical evidence and the analogies to finite affine planes \cite{2019rigidity} suggests that they only exist in prime power dimensions.
\end{itemize}

Of course, for a question of this type to be ``sensible'', the prescribed moduli of the pairwise inner products cannot be arbitrary. 
Some constraints follow directly from the fact that the corresponding projectors are elements of the set 
$$
S_1(\mathbb{C}^d)=\{A\in M_d(\mathbb{C}^d)| A=A^*,{\rm Tr}(A)=1\},
$$
which is the affine space of self-adjoint, trace-one matrices. The pairwise inner products between these projectors are fixed and non-negative, given by the squared moduli of the inner products between the original vectors. (Here we consider $S_1(\mathbb{C}^d)$ as an Euclidean space with the Hilbert-Schmidt inner product.) We shall refer to these constraints as the \emph{trivial requirements} (defined in Section~2). When these are satisfied, the original prescription defines a well-determined polytope $\mathcal P$ -- that is, a geometrical structure with fixed pairwise distances between its vertices — in a $(d^2 - 1)$‑dimensional real Euclidean space -- and a distinguished point $q\in \mathcal P$. The desired vector configuration then exists if and only if it is possible to inscribe $\mathcal P$ into the convex set of density matrices
$$
\mathcal{S}_d \equiv S^+_1(\mathbb{C}^d):=\{\rho\in S_1(\mathbb{C}^d)\,|\, \rho\geq 0\}
$$
such that  $q\in \mathcal P$ is mapped to $\frac{1}{d}I\in \mathcal{S}_d$.

In some cases — including both of the aforementioned ones — this condition on $q$ is automatically satisfied: for any inscription of $\mathcal{P}$ into $\mathcal{S}_d$, the distinguished point $q$ is necessarily mapped to $\frac{1}{d}I$. In particular, for SIC‑POVMs, the associated polytope is a regular simplex with $d^2$ vertices and $q$ as its center. The radius of the circumscribed sphere of this simplex matches that of $\mathcal{S}_d$, forcing any inscription to map $q$ to $\frac{1}{d}I$.

The same situation arises in the case of the polytope~$\mathcal{P}_d$ corresponding to a $d$-dimensional complete MUB system. Here again, $q \in \mathcal{P}_d$ is the center of the circumscribed sphere of the polytope, whose radius matches that of the circumscribed sphere of $\mathcal{S}_d$. Hence, any realization of~$\mathcal{P}_d$ as a configuration of projectors must take $q$ to $\frac{1}{d}I$, and the existence of a $d$-dimensional complete MUB system is equivalent to the possibility of inscribing~$\mathcal{P}_d$ into $\mathcal{S}_d$. This insightful geometric reformulation was first recognized and articulated by Bengtsson and Ericsson in~\cite{Bengtsson2005}, who also introduced the evocative name ``{\it complementarity polytope}'' for $\mathcal{P}_d$. Their work has played a foundational role in reframing the MUB existence problem within a convex-geometric context.

Thus, proving the non-existence of certain vector configurations -- and in particular, that of a complete MUB system -- boils down to showing that a certain polytope cannot be inscribed into the convex set of density matrices. 

It is therefore natural to look for quantities that are monotone under inscription. For example, a body of larger volume cannot be inscribed into one of a smaller volume. Somewhat less known, but for convex bodies, surface area also behaves in a monotonous manner. Actually, for convex bodies there are certain further natural lower dimensional analogues of these mentioned quantities, which, again, are monotone under inclusion.

Let $C\subset \mathbb R^D$ be a (non-empty) compact convex set and consider its {$\varepsilon$-neighborhood} $C_\varepsilon$; i.e.\! the set of points of $\mathbb R^D$ whose distance from $C$ is less than or equal to $\varepsilon$. It is well-known that the ($D$-dimensional) volume of $C_\varepsilon$ is a polynomial of $\varepsilon$ of order $D$:
$$
\vol_D(C_\varepsilon)= a_0(C) + a_1(C)\varepsilon + a_2(C)\varepsilon^2+\ldots +a_d(C) \varepsilon^d.
$$
Clearly, $a_0(C)=\vol_D(C_0)=\vol_D(C)$ is simply the volume of $C$, and it is also not difficult to see that $a_1(C)$ is actually the surface area of $C$. It turns out that the further coefficients are some sort of measures of the ``lower-dimensional contents'' of the body. The $N$-dimensional \emph{intrinsic volume} $V_N(C)$ is -- up to a scaling factor -- the coefficient $a_{D-N}(C)$:
$$
a_{D-N}(C) =  \chi_{_{D-N}}\, V_N(C).
$$
The factor $\chi_{_{D-N}}$ is independent of the body $C$; it is the (usual) volume of the $D-N$ dimensional unit ball (e.g.\ $\chi_{_2}=\pi$ and $\chi_{_3}=\frac{4}{3}\pi$) with the convention $\chi_{_0}=1$. This is included to make this quantity truly ``intrinsic'' (so that it would remain unchanged even if $C$ is isometrically embedded into a larger dimensional Euclidean space, and the epsilon-expansion of the volume of $C_\varepsilon$ is considered there). Since in our computations the embedding space is usually fixed to be the affine subspace generated by $C$, we will often just work directly with the coefficients 
$$
\widetilde{V}_N(C)\equiv a_{D-N}(C)=\chi_{_{D-N}}\, V_N(C),
$$
which we shall call the \emph{unnormalized} intrinsic volumes of $C$. Note that the ``conversion rate'' between $V_N(C)$ and $\widetilde{V}_N(C)$ depends not only on $N$, but on the dimension of the convex body $C$ (though not on the particular ``shape'' or ``size'' of $C$).

For a longer introduction to intrinsic volumes see e.g.\ \cite{HugWeil2020}. Here we only recall a few key properties:
\begin{itemize}
\item  {\it Monotonicity}: if $C_1\subset C_2$ then $V_N(C_1)\le V_N(C_2)$ for every $N$.
\item {\it Difficulty of computation:} apart from Euclidean balls and hypercubes, explicit formulas for all intrinsic volumes are known only for a few (high-dimensional) bodies such as the regular simplex and the so-called cross-polytope \cite{kabluchko2017expected};
\item {\it Multiple representations:} when $C$ is a polytope, intrinsic volumes can be expressed in terms of face-volumes and certain solid-angle quantities.
\end{itemize}
Note however, that explicit formula for the solid-angle given by the intersection of $k$ half-spaces is known only for $k\leq 3$; in fact, it might be that for the intersection of more half-spaces, in some sense, there exists no such formula at all (similar to how we have a formula for the solution of the quadratic, cubic and quartic equations, but not for the quintic). 
Consequently, we only have a more or less explicit algorithm for computing the ``true'' volume, the surface area, and the next two intrinsic volumes of a polytope.

The volume and the surface area of the quantum state space $\mathcal{S}_d$ (with respect to the Euclidean distance induced by the Hilbert Schmidt inner product), that is, $\vol_{D}(\mathcal{S}_d)=V_D(\mathcal{S}_d)$ and 
$$
\vol_{D-1}(\partial \mathcal S_d)=
\frac{d}{d\varepsilon}\vol_{D}(\mathcal{S}_{d,\varepsilon})|_{\varepsilon=0}=
2 V_{D-1}(\mathcal{S}_d),$$ where  
$D=d^2-1$ is the dimension of the convex body $\mathcal S_d$,
were computed by {\.Z}yczkowski, Sommers and Bengtsson
\cite{Zyczkowski2003, bengtsson2017geometry}:
\begin{align}
\label{eq:volume_of_Sd}
\vol_{D} (\mathcal S_d)= \sqrt{d}\, (2\pi)^{\frac{d(d-1)}{2}} \,\frac{\Gamma (1)\cdots \Gamma (d)}{\Gamma (d^2)}    \\ 
\label{eq:surface_of_Sd}
\vol_{D-1}(\partial \mathcal{S}_d)=
\sqrt{d-1}\, (2\pi)^{\frac{d(d-1)}{2}} \,\frac{\Gamma (1)\cdots \Gamma (d+1)}{\Gamma (d) \Gamma (d^2-1)}.
\end{align}

Their method factors out the symmetries of $\mathcal S_d$ implemented by the unitary group $U(d)$; this is analogous to using polar coordinates for a body of revolution, which reduces volume computations to lower-dimensional integrals. Similarly, the volume of the $\varepsilon$-neighborhood of the $D$-dimensional body $\mathcal S_d$ can be reduced to integrating a certain density on the $\varepsilon$-neighborhood of a regular simplex in a much lower $(d-1)$-dimensional space. 

Although the full integral is tractable for $d\le 3$, its domain is complicated and an explicit formula for general $d$ seems infeasible. What we realized however, is that one does not need an explicit formula for the polynomial $p_d(\varepsilon)=\vol_D(\mathcal S_d)$ for the computation of $p_d^{(n)}(0)$
for $n\leq 3$: these derivatives can be expressed as integrals on a much simpler domain, namely on a simplex. Using this idea we extended the work of {\.Z}yczkowski, Sommers and Bengtsson and obtained the following:
\begin{theorem}\label{theorem:main}
With $D=d^2-1$, $p_d(\varepsilon)= \vol_{D}(\mathcal{S}_{d,\varepsilon})$ and $p_d^{(n)}$ denoting the $n$-th derivative of $p_d$, we have 
\begin{align}
&p_d^{(2)}(0) = (d-1)d^{\frac{3}{2}} (2\pi)^{\frac{d(d-1)}{2}} \,\frac{\Gamma (1)\cdots \Gamma (d)}{\Gamma (d^2-2)}, \label{eq:main1} \\[10pt] 
&p_d^{(3)}(0) = d (d - 1)^{\frac{3}{2}}  (2\pi)^{\frac{d(d-1)}{2}}  \frac{\Gamma (1)\cdots \Gamma (d+1) }{\Gamma (d)\Gamma (d^2-3)}.  \label{eq:main2}
\end{align}
\end{theorem}
The above theorem provides an explicit expression for the two intrinsic volumes of $\mathcal S_d$ following its ``usual'' volume and surface are, since
\begin{align*}
V_{D-2}(\mathcal S_d)&=\frac{1}{2!}\frac{1}{\chi_{_2}}
\left(\frac{d}{d\varepsilon}\right)^2\vol_{D}(\mathcal{S}_{d,\varepsilon})|_{\varepsilon=0} = \frac{1}{2\pi} p_d^{(2)}(0),\\
V_{D-3}(\mathcal S_d)&=\frac{1}{3!}\frac{1}{\chi_{_3}}
\left(\frac{d}{d\varepsilon}\right)^3\vol_{D}(\mathcal{S}_{d,\varepsilon})|_{\varepsilon=0} = \frac{1}{8\pi}p_d^{(3)}(0).
\end{align*}
The reason we chose to present these formulas in the particular form as they appear in our theorem -- i.e.\ to give the value of the derivatives at zero, rather than the intrinsic volumes, which are the normalized coefficients of the polynomial $p_d$ -- are twofold. In part, because this is the way the result naturally emerges from the computation. More importantly, this form makes it easier to compare these findings with the already known volume \ref{eq:volume_of_Sd} and surface \ref{eq:surface_of_Sd} formulas, and spot the evident pattern. 

Interestingly, this pattern does not continue in a simple way. As mentioned, for $d\le 3$ the full polynomial $p_d$ can be computed explicitly. Evaluating the integral in Mathematica yields, for $d=3$,
\begin{align*}
\vol_{8}(\mathcal S_{3,\varepsilon})=\frac{\sqrt{3}\pi^3}{5040}+
\left(\frac{\sqrt{2}\pi^3}{105} \right) \varepsilon + \left(\frac{\pi^3}{5 \sqrt{3}} \right)\varepsilon^2+ \left(\frac{2\sqrt{2} \pi^3}{5} \right) \varepsilon^3 +
\\
\left( \frac{\sqrt{3} \pi^3}{4} +\frac{\pi^4}{3} \right) \varepsilon^4  + \left(\frac{3 \pi^3}
{\sqrt{2}}\right) \varepsilon^5 + \left( \frac{3 \sqrt{3} \pi^3}{8} + \frac{\pi^4}{3} \right) \varepsilon^6 + \left(\frac{18\sqrt{2} \pi^3}{35}  \right)\varepsilon^7 + \left(\frac{\pi^4 }{24}\right) \varepsilon^8.
\end{align*}
The constant and linear terms recover the known volume and surface formulas; the $\varepsilon^2$ and $\varepsilon^3$ coefficients agree with Theorem~\ref{theorem:main}. However, the coefficient of $\varepsilon^4$ already involves $\pi^4$ in addition to $\pi^3$, indicating that computing higher derivatives by the same method may not be possible and might require different ideas.

We next consider the intrinsic volumes of the complementarity polytope $\mathcal P_d$. Bengtsson and Ericsson computed its volume and surface area\footnote{Note that the formula for the surface area (equation (19) in their paper) contains an evident typographical error: in the denominator, the correct factor is $(d^2 - 2)!$ rather than $(d^2 - 1)!$.} in \cite{Bengtsson2005}: 
\begin{align*}
\vol_{D}(\mathcal{P}_d)&=  \frac{\sqrt{d}^{d+1}}{(d^2-1)!},     \\[10pt]
\vol_{D-1}(\partial \mathcal{P}_d)&=   \frac{\sqrt{d}^{d+2}\sqrt{d^2-1}}{(d^2-2)!},
\end{align*}
with $D=d^2-1$ as before. We extended their computations to obtain the next two unnormalized intrinsic volumes. (Note that the usual surface area is an unnormalized intrinsic volume: $\vol_{D-1}(\partial \mathcal{P}_d)=\widetilde{V}_{D-1}(\mathcal{P}_d)=2V_D(\mathcal P_d)$.)
\begin{theorem}\label{th:P}
For the $D=(d^2-1)$-dimensional convex polytope $\mathcal{P}_d$, we have
\begin{align}
&\widetilde{V}_{D-2}(\mathcal{P}_d) = \frac{ \sqrt{2 d^2 -d -2}  (d^2 - 1) d^{d/2+1}}{4\,(d^2-3)!} \alpha \label{eq:P1}, \\[10pt] 
&\widetilde{V}_{D-3}(\mathcal{P}_d) =  \frac{2\sqrt{3d^2-2d-3}(d^2-1)(d-2)d^{d/2+1}}{9\,(d^2-4)!} \arctan \sqrt{\tan \frac{3\alpha}{4} \tan^3 \frac{\alpha}{4}} \ + \nonumber \\ 
 &+\frac{2\sqrt{d^2-d-1}(d^2-1)(d-1) d^{d/2+2}}{3\,(d^2-4)!} \arctan \sqrt{\tan(\frac{\alpha}{2}+\frac{\beta}{4}) \tan(\frac{\alpha}{2}-\frac{\beta}{4}) \tan^2(\frac{\beta}{4})},  \label{eq:P2}
\end{align}
where $\alpha=\arccos \big(1 - \frac{d}{d^2-1} \big)$ and $\beta=\arccos \big(1 - \frac{2d}{d^2-1} \big).$
\end{theorem}
Both the derivations and the final expressions are long and combinatorially delicate, so we sought an independent verification. We implemented a Monte-Carlo method to estimate volumes and in turn, to obtain intrinsic volumes numerically. The numerical estimates confirm our formulas to several decimal places, and the code may be useful to other researchers for estimating further intrinsic volumes. We shall not explain here this computer-based method in detail, but we do provide the (commented) python-code we used; see \cite{our_code}.

Sadly, a straightforward evaluation of these formulas for small $d$, together with a short asymptotic analysis for large $d$, shows that $V_N(\mathcal P_d)\leq V_N(\mathcal S_d)$ for all $d=2,3,\ldots$ and $N=D,D-1,D-2,D-3$ with $D=d^2-1$. Thus the first four intrinsic volumes do not rule out an inscription of $\mathcal P_d$ into $\mathcal S_d$.

Although knowing the first four intrinsic volumes of $\mathcal S_d$ does not settle the MUB problem, these quantities still provide valuable geometric information. To illustrate their use, we construct four hypothetical configurations of unit vectors in $\mathbb C^6$, specified only by the moduli of their pairwise inner products. Each configuration satisfies the \emph{trivial requirements}, yet none actually exists. The first is excluded already by comparing the volume of $\mathcal S_6$ with the volume of the polytope determined by the hypothetical configuration. The second passes the volume test but is ruled out by the surface area. The third passes both volume and surface tests, but is excluded by the third intrinsic volume. The fourth shows that sometimes one has to go as far as the fourth intrinsic volume to rule out an inscription.

The paper is organized as follows. After this introduction, Section~2 (Preliminaries) explains in detail the \emph{trivial requirements} and how -- when they hold -- the original problem reduces to an inscription problem. We also recall how intrinsic volumes of a polytope can be expressed using exterior solid angles and face volumes. Section~3 contains the computations leading to Theorem~\ref{theorem:main} (the third and fourth intrinsic volumes of $\mathcal S_d$). Section~4 treats the complementarity polytope and proves Theorem~\ref{th:P}. In Section~5 we present the four example configurations in $\mathbb C^6$ and show their non-existence. The Appendix contains two technical sections: one evaluates Selberg-type integrals used in Section~3, and the other fills in computational details from Section~4.

\section{Preliminaries}\label{sec:prelim}

\subsection{The trivial requirements}

Suppose $\psi_1,\ldots \psi_n \in \mathbb C^d$ are unit-vectors and let $$M_{j,k}=|\langle \psi_j,\psi_k\rangle|^2$$ for $j,k\in \{ 1,\ldots n \}$. Clearly, each entry of $M$ is a non-negative number and the diagonal entries are all equal to $1$. However, one can deduce much more about $M$. Indeed, consider the matrices
$$A_k = |\psi_k\rangle\langle\psi_k| -\frac{1}{d}I \;\;\;\; (k=1,\ldots n).$$
On one hand, by a straightforward computation, their pairwise Hilbert--Schmidt inner product is
$$
\langle A_j, A_k\rangle_{HS} \equiv {\rm Tr}(A_j^* A_k) = M_{j,k} -\frac{1}{d}.
$$
Thus,
$$G=M-\frac{1}{d}J,$$ 
where $J$ is the matrix whose entries are all equal to $1$ (of size  $n\times n$), is positive semidefinite, since it is a Gram-matrix. On the other hand, the matrices 
$A_1,\ldots A_k$ are self-adjoint and traceless; thus, we may view them as a collection of elements of the $(d^2-1)$-dimensional Euclidean space (i.e.\ real inner product space)
$$
S_0(\mathbb{C}^d)=\{A\in M_d(\mathbb{C}^d)| A=A^*,{\rm Tr}(A)=0\}.
$$
This can be summarized as follows.
\smallskip

\noindent
{\bf Trivial requirement:} there exit some vectors $v_1,\ldots v_n$ of a Euclidean space satisfying ($j,k\in \{1,\ldots n\}$)
\begin{itemize}
    \item[i)] $\|v_k\|^2=\langle v_k,v_k\rangle = 1-\frac{1}{d}$,
    \item[ii)] $\langle v_j,v_k\rangle \geq -\frac{1}{d}$,
    \item[iii)] ${\rm dim}(\rm Span\{v_1,\ldots v_n\})\leq d^2-1$,
\end{itemize}
such that $M_{j,k}=\langle v_j,v_k\rangle + \frac{1}{d}$ for all $j,k\in\{1,\ldots n\}$; i.e., such that $M=G+\frac{1}{d}J$ where $G$ is the Gram matrix of $v_1,\ldots v_n$ and $J$ is the matrix whose entries are all equal to $1$.

Note that the above is trivially equivalent to saying that $M-\frac{1}{d}J$ is positive semidefinite and for all $j,k\in \{1,\ldots n\}$:
\begin{itemize}
    \item[i)] $M_{k,k} =1$,
    \item[ii)] $M_{j,k} \geq 0$,
    \item[iii)] ${\rm rk}(M-\frac{1}{d}I)\leq d^2-1$. 
\end{itemize}
Note further, that many seemingly additional requirements in fact are encoded in the positive semidefinitness of $M-\frac{1}{d}J$. For example, we know that there cannot be more than $d$ pairwise orthogonal unit vectors in $\mathbb C^d$; so one may wonder, if this also follows from the above requirements. The answer is yes: since the sum of the entries of a positive semidefinite matrix is always nonnegative, we have that 
$$
\sum_{j,k} (M_{j,k}-\frac{1}{d})\geq 0 \;\; \Rightarrow \;\; \sum_{j,k}|\langle \psi_j,\psi_k\rangle|^2\geq \frac{n^2}{d},
$$
which is a -- by the way, well-known -- quantitative strengthening of the statement that a $d$-dimensional space can contain at most $d$ pairwise orthogonal unit vectors. In fact, a lot of results can be derived just by skillfully exploiting these conditions; e.g.\ this is how the main theorem of \cite{MatolcsiWeinerMUB21} is achieved. 

We shall know show that when these conditions are satisfied, the original question about the existence of a collection of unit vectors of $\mathbb C^d$ with prescribed moduli can be turned into a problem of inscription. This is well-known to experts of the field, but we include here a formal statement with a short proof in part, because of self-containment, and in part, to fix conventions and notations.
\begin{lemma}
Suppose $v_1,\ldots v_n$ is a collection of vectors of a Euclidean space satisfying the above listed trivial requirements, and let $\mathcal P$ be the polytope formed by the convex hull of
$\{0\}\cup \{v_1,\ldots v_n\}$, with distinguished point $q\equiv 0\in \mathcal P$. Then there exists a collection of unit vectors $\psi_1,\ldots \psi_n\in \mathbb C^d$ such that 
$$
|\langle \psi_j,\psi_k\rangle|^2 = \langle v_j,v_k\rangle + \frac{1}{d}\;\;\; (j,k\in\{1,\ldots n\})
$$
if and only if $\mathcal P$ can be (isometrically) inscribed into $\mathcal S_d$ in a manner that maps $q\in \mathcal P$ to $\frac{1}{d}I\in \mathcal S_d$.
\end{lemma}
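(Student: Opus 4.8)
\emph{Proof plan.} The plan is to run both directions through the rank-one projections $\rho_k=|\psi_k\rangle\langle\psi_k|$, the extreme points of $\mathcal{S}_d$, and their traceless shifts $A_k=\rho_k-\frac{1}{d}I\in S_0(\mathbb{C}^d)$, using the identity $\langle A_j,A_k\rangle_{HS}=|\langle\psi_j,\psi_k\rangle|^2-\frac{1}{d}$ recorded above as the bridge between the Hilbert--Schmidt geometry of $\mathcal{S}_d$ near its centre $\frac{1}{d}I$ and the prescribed numbers $\langle v_j,v_k\rangle$. Here an (isometric) inscription of $\mathcal{P}$ into $\mathcal{S}_d$ sending $q$ to $\frac{1}{d}I$ means an affine isometry $\Phi$ of the affine hull of $\mathcal{P}$ into the affine span $\frac{1}{d}I+S_0(\mathbb{C}^d)$ of $\mathcal{S}_d$ with $\Phi(\mathcal{P})\subseteq\mathcal{S}_d$ and $\Phi(q)=\frac{1}{d}I$; since $q=0$ is a vertex of $\mathcal{P}$, that affine hull equals ${\rm Span}\{v_1,\ldots,v_n\}$, and writing $\Phi(x)=\frac{1}{d}I+L(x)$ identifies such a $\Phi$ with a linear isometry $L$ of ${\rm Span}\{v_1,\ldots,v_n\}$ into $S_0(\mathbb{C}^d)$.

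For the forward implication I would start from unit vectors $\psi_1,\ldots,\psi_n$ realising the prescribed moduli. Then $\langle A_j,A_k\rangle_{HS}=|\langle\psi_j,\psi_k\rangle|^2-\frac{1}{d}=\langle v_j,v_k\rangle$ for all $j,k$, so $(v_k)_k$ and $(A_k)_k$ have the same Gram matrix; by the standard correspondence between Gram matrices and isometric maps, the assignment $v_k\mapsto A_k$ extends to a linear isometry $L$ from ${\rm Span}\{v_1,\ldots,v_n\}$ onto ${\rm Span}\{A_1,\ldots,A_n\}\subseteq S_0(\mathbb{C}^d)$, which fits since $\dim{\rm Span}\{v_1,\ldots,v_n\}\le d^2-1=\dim S_0(\mathbb{C}^d)$ by trivial requirement (iii). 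Then $\Phi(x)=\frac{1}{d}I+L(x)$ satisfies $\Phi(q)=\frac{1}{d}I$ and $\Phi(v_k)=\rho_k\in\mathcal{S}_d$, so by convexity of $\mathcal{S}_d$ one gets $\Phi(\mathcal{P})={\rm conv}(\{\frac{1}{d}I\}\cup\{\rho_1,\ldots,\rho_n\})\subseteq\mathcal{S}_d$, the desired inscription.

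For the converse I would take such an inscription $\Phi$, write $\Phi(x)=\frac{1}{d}I+L(x)$, and put $\rho_k=\Phi(v_k)\in\mathcal{S}_d$ and $A_k=L(v_k)=\rho_k-\frac{1}{d}I$. The isometry together with trivial requirement (i) gives $\|\rho_k-\frac{1}{d}I\|_{HS}^2=\|v_k\|^2=1-\frac{1}{d}$, and expanding the left-hand side yields ${\rm Tr}(\rho_k^2)=1$. The one substantive step is then the rigidity fact that a positive semidefinite, trace-one matrix with ${\rm Tr}(\rho_k^2)=1$ must be a rank-one projection --- by Cauchy--Schwarz applied to its eigenvalues, $\sum_i\lambda_i^2\le(\sum_i\lambda_i)^2$ with equality only when a single $\lambda_i$ equals $1$ --- so $\rho_k=|\psi_k\rangle\langle\psi_k|$ for some unit vector $\psi_k\in\mathbb{C}^d$. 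Finally, since $L$ preserves inner products, $|\langle\psi_j,\psi_k\rangle|^2={\rm Tr}(\rho_j\rho_k)=\langle A_j,A_k\rangle_{HS}+\frac{1}{d}=\langle v_j,v_k\rangle+\frac{1}{d}$, as required.

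I expect the only genuinely non-bookkeeping point to be this last rigidity observation in the converse direction --- recognising that the length constraint (i), transported by the isometry, pins each $\rho_k$ onto the pure-state boundary of $\mathcal{S}_d$. Everything else is the routine Gram-matrix/isometry dictionary, with trivial requirement (iii) supplying exactly the dimension count needed to realise the map inside $S_0(\mathbb{C}^d)$, while requirement (ii) is then automatically consistent with the resulting inner products being squared moduli.
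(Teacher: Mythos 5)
Your proposal is correct and follows essentially the same route as the paper: the forward direction via the Gram-matrix identity $\langle A_j,A_k\rangle_{HS}=|\langle\psi_j,\psi_k\rangle|^2-\frac1d$, and the converse by using requirement (i) and the isometry to force ${\rm Tr}(\Phi(v_k)^2)=1$, hence rank-one projections, and then recovering the inner products. The only cosmetic difference is that you transport inner products through the linear part $L$ of an affine isometry, while the paper gets the same identity from distances alone via polarization.
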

\begin{proof}
The ``only if'' part has been already explained when we introduced the trivial requirements. What we still need to show is the ``if'' part. So suppose $\Phi$ is an inscription of $\mathcal P$ into $\mathcal S_d$ such that $\Phi(q)=\frac{1}{d}I$. Then for every $k\in\{0,\ldots n\}$, $\Phi(v_k)$ is a density matrix whose Hilbert-Schmidt distance from $\frac{1}{d}I$ is equal to the distance of $v_k$ from $q$, which is $\sqrt{1-\frac{1}{d}}$. Hence
\begin{eqnarray*}
1-\frac{1}{d}={\rm Tr}\big((\Phi(v_k)-\frac{1}{d}I)^2\big)=
{\rm Tr}\big(\Phi(v_k)^2-\frac{2}{d}\Phi(v_k)+\frac{1}{d^2}I\big)\\
= {\rm Tr}\big(\Phi(v_k)^2\big)-\frac{1}{d},   
\end{eqnarray*}
showing that $\Phi(v_k)$ is a density matrix whose square has trace equal to one; i.e.\ it is a rank one projection. Therefore, for every $k\in\{0,\ldots n\}$, there exists 
a unit vector $\psi_k\in \mathbb C^d$ such that $\Phi(v_k)=|\psi_k\rangle\langle\psi_k|$
Then 
\begin{eqnarray*}
2|\langle \psi_j,\psi_k\rangle|^2 &=& 2\langle \Phi(v_j),\Phi(v_k)\rangle_{HS} = 
2\langle\Phi(v_j)-\frac{1}{d}I,\Phi(v_k)-\frac{1}{d}I\rangle_{HS}+\frac{2}{d}
\\
&=&\|\Phi(v_j)-\Phi(q)\|^2_{HS}
+
\|\Phi(v_k)-\Phi(q)\|^2_{HS}-\|\Phi(v_k)-\Phi(v_j)\|^2_{HS}+\frac{2}{d}\\
&=&\|v_j\|^2+\|v_k\|^2-\|v_j-v_k\|^2+\frac{2}{d}=2 \langle v_j,v_k\rangle + \frac{2}{d},
\end{eqnarray*}
which concludes our proof.
\end{proof}

\subsection{Intrinsic volumes of polytopes}
In this subsection we shortly summarize and explain an alternative way to express (and to compute) the intrinsic volumes of polytopes. We will use the notation for the $\varepsilon$-neighborhood of a set $K\subset \mathbb{R}^D$
$$
K_{\varepsilon}=K +  \varepsilon B^D
$$
where $B^k$ denotes the $k$-dimensional unit ball, and
$$
K_{\varepsilon, \aff} = K_{\varepsilon} \cap \aff (K)
$$
when it is considered in the affine subspace spanned by $K$. In this section, we shall always assume that $K$ is a convex set, and that the affine subspace generated by $K$ is the full space $\mathbb R^D$; that is, $K$ is a $D$-dimensional convex body.

As mentioned, the volume of $K_{\varepsilon}$ is a polynomial in $\varepsilon$ of degree $D$:
$$
\vol_D(K_\varepsilon)= \sum_{k=0}^D \widetilde{V}_{D-k}(K) \varepsilon^k.
$$
The above expression is often referred to as the ``Steiner Formula''. The intrinsic volumes of $K$ are the normalized coeffitients $V_{D-k}(K)=\frac{1}{\chi_{_{k}}}\widetilde{V}_{D-k}(K)$ where $\chi_{_k}=\vol_k(B^k)$.

In case $K=P$ is a convex polytope, alternative formulas can be given for these coeffitients by decomposing the $\varepsilon$-neighborhood into a union of disjoint sets:
\begin{align*}
P_\varepsilon = \bigcup_{k=0}^{D} R_{k,\varepsilon}, \quad 
R_{k,\varepsilon} := \{x \in P_\varepsilon &\,\mid\, 
\text{closest point of $P$ to $x$ lies on the interior} \\[-10pt]
&\quad \text{of a $k$-dimensional face}\}.
\end{align*}
More explicitly 
\begin{equation}
R_{k,\varepsilon} = \bigcup_{\substack{F  \in \mathcal{F}_k(P)}} \left( \relint{F} + \left( N_P(F) \cap \varepsilon B^D \right) \right)  \label{eq:decomp}
\end{equation}
where $\mathcal{F}_k(P)$ is the set of $k$-dimensional faces of $P$ and $N_P(F)$ is the normal cone defined as the set of all outward normals to supporting hyperplanes of $P$ that contain the face $F$; that is,
$$
N_P(F) =  \{ u \in \mathbb{R}^D \mid \langle u, y - x \rangle \le 0 \quad \forall y \in P \} \subset F^\perp,
$$
where $x$ is an arbitrary point of the relative interior of $F$, and $F^\perp$ is the orthogonal complement of the linear subspace parallel to $F$. Since the face and the corresponding normal cone are orthogonal to each other, 
we have 
$$
\vol_{D}(R_{D-k,\varepsilon})=\sum_{F \in \mathcal{F}_{D-k}(P)} \vol_{D-k}(F) \cdot \vol_{k}(N_P(F)\cap \varepsilon B^D).
$$
Moreover, as $N_P(F)$ is a cone, it is unchanged under dilation, $N_P(F)\cap \varepsilon B^D= \varepsilon (N_P(F)\cap B^D)$, and hence
$\vol_{k}(N_P(F)\cap \varepsilon B^D)=\varepsilon^k\vol_{k}(N_P(F)\cap B^D)$. Putting everything together,
one concludes that
\begin{equation}\label{eq:VPoly}
V_{D-k}(P)=\sum_{F \in \mathcal{F}_{D-k}(P)} \vol_{D-k}(F) \frac{\vol_{k}(N_P(F)\cap B^D)}{\vol_{k}(B^k)}
\end{equation}
The fraction appearing in the above formula is the so-called {\it solid angle} of the normal cone $N_P(F)$.

\section{Statespace}\label{sec:statespace}

Any density matrix can be diagonalized by a unitary rotation, i.e., for all $\rho \in \mathcal{S}_d$ there exist a unitary matrix $U$ and a diagonal matrix 
$\Lambda$ such that
\[
\rho = U \Lambda U^{-1}.
\]
However, this decomposition is not unique. Indeed, for any diagonal unitary matrix $B$, we have $U \Lambda U^{-1} = (UB)\Lambda(UB)^{-1}$. What is unique is the multiset of diagonal elements of $\Lambda$; that is, the 
eigenvalues of $\rho$ with their multiplicities. 
This multiset can be viewed as an element of $\Delta^{d-1} / \!\sim$, where 
$\Delta^{d-1}$ denotes the standard $(d-1)$-dimensional simplex, and the 
equivalence relation $\sim$ identifies points in $\Delta^{d-1}$ that differ 
only by a permutation of their coordinates (i.e., different orderings of the 
same eigenvalues). 
Following the approach of \cite{Zyczkowski2003}, the state space 
$\mathcal{S}_d$, together with the Lebesgue measure $d\lambda^{d^2}$, can thus 
be identified with
$$
(\Delta^{d-1}/ \sim)  \times (\mathcal{U}(d)/\mathcal{U}(1)^d)
$$
with the measure
$$
 d\mu \times d\nu_{\text{Haar}} 
$$
where
$$
d\mu(x)= f_d(x) d\lambda^d
$$
and
\[
f_d(x)\equiv f_d(x_1,\ldots,x_d)=\prod_{1\leq i<j\leq d} (x_i-x_j)^2
\]
is the density function with respect to de Lebesgue measure. So the volume can be calculated as
$$
\vol_d(\mathcal{S}_d) = \frac{1}{d!} \vol_{d-1}(\Delta^{d-1})     \vol_{d(d-1)} (Fl_{\mathbb{C}}^d)  
$$
where $Fl_{\mathbb{C}}^d \equiv \mathcal{U}(d)/\mathcal{U}(1)^d$ called Flag manifold and
\[
\vol_{d(d-1)} (Fl_{\mathbb{C}}^d) =\frac{(2\pi)^{d(d-1)/2}}{1!2!\dots (d-1)!}
\]

\begin{proof}[Proof of Theorem~\ref{theorem:main}]
For any $A \in S_1(\mathbb{C}^d)$, the closest point in the state space 
$\widetilde A \in S_1^+(\mathbb{C}^d)$ is diagonal in the same basis. 
Indeed, since the Hilbert--Schmidt norm is unitarily invariant, 
minimizing $\|A - X\|$ over $X \in S_1^+(\mathbb{C}^d)$ is equivalent to 
minimizing $\|\operatorname{diag}(\lambda) - X'\|$, 
where $A = U \operatorname{diag}(\lambda) U^*$ and $X' = U^* X U$. 
Writing the squared Hilbert--Schmidt norm as
\[
\|\operatorname{diag}(\lambda) - X'\|^2
  = \sum_{i=1}^d (\lambda_i - X'_{ii})^2 
    + \sum_{i \neq j} |X'_{ij}|^2,
\]
we see that for any fixed diagonal entries $(X'_{ii})_{i=1}^d$, 
the off-diagonal term $\sum_{i \neq j} |X'_{ij}|^2$ is nonnegative and can only increase the distance. Hence, an optimal $X'$ has $X'_{ij} = 0$ for $i \neq j$ and is therefore diagonal in the same eigenbasis as $A$. 

As a consequence, the $\varepsilon$-neighborhood of the state space 
$\mathcal{S}_{d,\varepsilon}$ can be identified with
\[
(\Delta^{d-1}_{\varepsilon, \mathrm{aff}} / \!\sim) 
\times \bigl(\mathcal{U}(d) / \mathcal{U}(1)^d\bigr),
\]
where the equivalence relation $\sim$ is defined as above. (Here, the $\varepsilon$-neighborhood of $\mathcal{S}_d$ is understood as a subset of $S_1(\mathbb{C}^d)$.) Consequently, its volume can be calculated as
\[
p_d(\varepsilon)=\vol_D(\mathcal{S}_{d,\varepsilon}) 
  = \frac{1}{d!} 
    \vol_{d-1}(\Delta^{d-1}_{\varepsilon, \mathrm{aff}}) \,
    \vol_{d(d-1)}(Fl_{\mathbb{C}}^d).
\]
The second and third derivatives of $p_d$ at zero are
\begin{align}
p_d^{(2)}(0)  
  &= \frac{1}{d!} 
     \vol_{d(d-1)}(Fl_{\mathbb{C}}^d)
     \left. \frac{d^2}{d\varepsilon^2} 
     \vol_{d-1}(\Delta^{d-1}_{\varepsilon, \mathrm{aff}}) 
     \right|_{\varepsilon=0}, 
     \label{eq:V2} \\
p_d^{(3)}(0)
  &= \frac{1}{d!} 
     \vol_{d(d-1)}(Fl_{\mathbb{C}}^d)
     \left. \frac{d^3}{d\varepsilon^3} 
     \vol_{d-1}(\Delta^{d-1}_{\varepsilon, \mathrm{aff}}) 
     \right|_{\varepsilon=0}. 
     \label{eq:V3}
\end{align}
To evaluate these derivatives, it suffices to determine the 
$\varepsilon$-expansion of the integral up to degree four. 
We decompose the integral as
\begin{align*}
\vol_{d-1}(\Delta^{d-1}_{\varepsilon, \mathrm{aff}}) 
 &= \int_{\Delta^{d-1}_{\varepsilon, \mathrm{aff}}} f_d \, d\lambda^{d-1} 
    = \sum_{k=0}^{d-1} \int_{R_{k,\varepsilon}} f_d \, d\lambda^{d-1} \\
 & = \int_{R_{d-1,\varepsilon}} f_d \, d\lambda^{d-1} 
    + \int_{R_{d-2,\varepsilon}} f_d \, d\lambda^{d-1} 
    + \mathcal{O}(\varepsilon^4),
\end{align*}
by decomposing the domain of integration according to
formula~\eqref{eq:decomp} in Section~\ref{sec:prelim} as the disjoint union
\[
\Delta^{d-1}_{\varepsilon, \mathrm{aff}}
  = \bigcup_{k=0}^{d-1} R_{k,\varepsilon},
  \qquad
  R_{k,\varepsilon} 
  := \bigcup_{\substack{F \in \mathcal{F}_k(P)}} 
     \left( \relint F + \bigl( N_{\Delta^{d-1}}(F) \cap \varepsilon B^d \bigr) \right),
\]
and each $R_{k,\varepsilon}$ consists of points in the $\varepsilon$-neighborhood closest to a $k$-face of the simplex.

The term corresponding to $k = d-1$ in the above sum is constant, since 
$R_{d-1,\varepsilon} = \Delta^{d-1}$ is independent of $\varepsilon$; this term yields 
the volume of the state space itself. 
All terms in the sum~(ref) coming from faces of 
dimension less than $d-2$ contribute only to higher-order terms 
(of degree $\geq 4$) in the $\varepsilon$-expansion of the integral. 
Indeed, the function $f_d$ is a product of squared differences, so it 
vanishes to at least second order ($\varepsilon^2$) near any face of 
dimension less than $d-2$, since such faces lie in coordinate subspaces 
where some $x_i = x_j$. 
At the same time, the corresponding neighborhood around a $k$-face 
(with $k < d-2$) contributes a volume of order at least $\varepsilon^2$. 
Combining the vanishing behavior of $f_d$ with the dimensionality of the 
neighborhood yields a contribution of order at least $\varepsilon^4$ for 
these faces. 

Hence, the only term contributing to the second and third derivatives 
in~(\ref{eq:V2})--(\ref{eq:V3}) corresponds to the facets of the simplex. 
Due to the symmetry of the function $f_d$ and of the domain, the integral 
can be rewritten as $d$ times the integral over one specific facet 
$\Delta^{d-2}$—namely, the facet spanned by the first $d-1$ basis vectors. 
At the same time, we rewrite the integral using the \emph{coarea formula} (\cite{morganGMT, federerGMT}) by decomposing the domain as 
$R_{d-2,\varepsilon} = \Delta^{d-2} \times [0, \varepsilon]$. 
\[
I(\varepsilon) := \int_{R_{d-2,\varepsilon}} f_d \, d\lambda^{d-1} 
   = d \int_{\Delta^{d-2}} \! \int_0^\varepsilon 
     f_d(x + t v) \, dt \, d\lambda^{d-2},
\]
where $v = \frac{1}{\sqrt{d(d-1)}}(1, \dots, 1, 1-d)$ is the unit normal 
vector to the facet $\Delta^{d-2}$. 
Differentiating with respect to $\varepsilon$ gives
\begin{align}
I^{(1)}(\varepsilon)  
  &= d \int_{\Delta^{d-2}}  f_d(x + \varepsilon v) \, d\lambda^{d-2}, 
    \nonumber \\
I^{(2)}(0)
  &= d \int_{\Delta^{d-2}} 
     \left. \partial_\varepsilon f_d(x + \varepsilon v) \right|_{\varepsilon=0} 
     d\lambda^{d-2}, 
     \label{eq:deriv_1} \\
I^{(3)}(0)
  &= d \int_{\Delta^{d-2}} 
     \left. \partial_\varepsilon^2 f_d(x + \varepsilon v) \right|_{\varepsilon=0} 
     d\lambda^{d-2}.
     \label{eq:deriv_2}
\end{align}
The first and second directional derivatives appearing in these integrals 
can be expressed in terms of the gradient and the Hessian matrix of $f_d$:
\begin{align}
\left. \partial_\varepsilon f_d(x + \varepsilon v) \right|_{\varepsilon=0} 
  &= \langle \nabla f_d(x), v \rangle, \\
\left. \partial_\varepsilon^2 f_d(x + \varepsilon v) \right|_{\varepsilon=0} 
  &= \langle v, H_{f_d}(x) \, v \rangle.
\end{align}

We now derive the second derivative of $p_d$ at zero~\eqref{eq:main1} by evaluating \eqref{eq:deriv_1}. We decompose the normal vector as 
$v = s - e_d$, where 
$s = \frac{1}{\sqrt{d(d-1)}}(1, \dots, 1)$ and 
$e_d = \frac{1}{\sqrt{d(d-1)}}(0, \dots, 0, d)$. 
With this decomposition, the directional derivative simplifies as
\[
\langle \nabla f_d(x), v \rangle
  = \langle \nabla f_d(x), s \rangle 
    - \langle \nabla f_d(x), e_d \rangle
  = -\sqrt{\frac{d}{d-1}}\, \partial_d f_d(x),
\]
since the first term vanishes due to the symmetry of $f_d$. 
Using the recursive factorization
\[
f_d(x_1, \ldots, x_d)
  = f_{d-1}(x_1, \ldots, x_{d-1}) 
    \prod_{j=1}^{d-1} (x_j - x_d)^2,
\]
we obtain
\begin{align}\label{eq:firstderiv}
\partial_d f_d(x)
  &= f_{d-1}(x_1, \ldots, x_{d-1})\,
     \partial_d \!\left(
       \prod_{j=1}^{d-1} (x_j - x_d)^2
     \right) \\ \nonumber
  &= -2 f_{d-1}(x_1, \ldots, x_{d-1})
     \sum_{j=1}^{d-1} (x_j - x_d)
       \prod_{\substack{k=1 \\ k \neq j}}^{d-1} (x_k - x_d)^2. 
\end{align}
The integral in~\eqref{eq:deriv_1} is taken over the facet $\Delta^{d-2}$, where $x_d = 0$. Substituting into the previous expression yields
\[
2 \sqrt{\frac{d}{d-1}}\,
  f_{d-1}(x_1, \ldots, x_{d-1})
  \sum_{j=1}^{d-1} 
    x_j \prod_{\substack{k=1 \\ k \neq j}}^{d-1} x_k^2.
\]
By symmetry of both the integrand and the domain, each term in the sum 
contributes equally. Hence, we may take $(d-1)$ times the last term 
($j = d-1$), obtaining
\[
(d-1)\, x_{d-1} \prod_{k=1}^{d-2} x_k^2 
  = (d-1)\, x_1 x_2 \cdots x_{d-2} 
    \prod_{i=1}^{d-1} x_i.
\]
Substituting this into~\eqref{eq:deriv_1} and inserting the expression 
for $f_{d-1}$ gives
\[
2 \sqrt{d(d-1)} 
  \int_{\Delta^{d-2}}
    x_1 x_2 \cdots x_{d-2}
    \prod_{i=1}^{d-1} x_i
    \prod_{i<j}^{d-1} (x_i - x_j)^2 
    \, d\lambda^{d-2}.
\]
The integral above is of the same form as that appearing in 
Proposition~\ref{prop:main} in Appendix~\ref{appendix:selberg}, with corresponding parameters
\[
I_{\mathrm{Selberg}}(d-1, 2, 1, 0, d-2)
  = \frac{\Gamma(d+1)}{2\, \Gamma(d^2 - 2)} 
    \prod_{i=1}^{d} \Gamma(i)^2.
\]
Substituting this result into~\eqref{eq:V2} yields the desired formula for~\eqref{eq:main1}.

We now derive the third derivative of $p_d$ at zero~\eqref{eq:main2} by evaluating \eqref{eq:deriv_2}. 
We calculate the second directional derivative using the same decomposition 
of the normal vector $v = s - e_d$, where 
$s = \frac{1}{\sqrt{d(d-1)}}(1, \dots, 1)$ and 
$e_d = \frac{1}{\sqrt{d(d-1)}}(0, \dots, 0, d)$. 
Then
\[
\langle v, H(x) v \rangle
  = \langle s, H(x) s \rangle 
    - \langle e_d, H(x) s \rangle 
    - \langle s, H(x) e_d \rangle 
    + \langle e_d, H(x) e_d \rangle
  = \frac{d}{d-1}\, \partial_d^2 f(x),
\]
since, by symmetry of the Hessian $H(x)$, the first three terms vanish
(the vectors $H(x)s$ and $sH(x)$ are both zero).  
To compute $\partial_d^2 f(x)$, we differentiate~\eqref{eq:firstderiv} 
again. With a slight abuse of notation, we write $f_{d-1}$ for 
$f_{d-1}(x_1, \ldots, x_{d-1})$:
\begin{align*}
\partial_d^2 f(x)
  &= -2 f_{d-1} 
     \sum_{j=1}^{d-1}
       \partial_d \!\left[
         (x_j - x_d)
         \prod_{\substack{k=1 \\ k \neq j}}^{d-1} (x_k - x_d)^2
       \right] \\
  &= -2 f_{d-1}
     \sum_{j=1}^{d-1}
       \Big[
         -\!\prod_{\substack{k=1 \\ k \neq j}}^{d-1} (x_k - x_d)^2
         + (x_j - x_d)
           \partial_d \!\left(
             \prod_{\substack{k=1 \\ k \neq j}}^{d-1} (x_k - x_d)^2
           \right)
       \Big] \\
  &= -2 f_{d-1}
     \sum_{j=1}^{d-1}
       \Big[
         -\!\prod_{\substack{k=1 \\ k \neq j}}^{d-1} (x_k - x_d)^2
         + (x_j - x_d)
           \Big(
             -2 \sum_{\substack{k=1 \\ k \neq j}}^{d-1}
                 (x_k - x_d)
                 \prod_{\substack{l=1 \\ l \neq j,\, l \neq k}}^{d-1}
                   (x_l - x_d)^2
           \Big)
       \Big].
\end{align*}
Simplifying the signs, we obtain
\[
\partial_d^2 f(x)
  = 2 f_{d-1} 
    \left(
      \sum_{j=1}^{d-1} 
        \prod_{\substack{k=1 \\ k \neq j}}^{d-1} (x_k - x_d)^2
      + 2 \sum_{j=1}^{d-1} (x_j - x_d)
          \sum_{\substack{k=1 \\ k \neq j}}^{d-1}
            (x_k - x_d)
            \prod_{\substack{l=1 \\ l \neq j,\, l \neq k}}^{d-1}
              (x_l - x_d)^2
    \right).
\]
The integral in~\eqref{eq:deriv_2} is taken over the facet $\Delta^{d-2}$, where $x_d = 0$. Substituting $x_d = 0$ into the previous expression yields
\[
2 f_{d-1} \Bigg(
    \sum_{j=1}^{d-1} \prod_{\substack{k=1 \\ k \neq j}}^{d-1} x_k^2
  + 2 \sum_{j=1}^{d-1} x_j \sum_{\substack{k=1 \\ k \neq j}}^{d-1} x_k 
      \prod_{\substack{l=1 \\ l \neq j,\, l \neq k}}^{d-1} x_l^2
\Bigg).
\]
In the second sum, we can rewrite $\sum_{j=1}^{d-1} x_j \sum_{k \neq j} x_k 
= 2 \sum_{j<k}^{d-1} x_j x_k$, so the integrand in~\eqref{eq:deriv_2} becomes
\[
\frac{d}{d-1} \, 2 f_{d-1} \Bigg(
    \sum_{j=1}^{d-1} \prod_{\substack{k=1 \\ k \neq j}}^{d-1} x_k^2
  + 4 \sum_{j<k}^{d-1} x_j x_k \prod_{\substack{l=1 \\ l \neq j,\, l \neq k}}^{d-1} x_l^2
\Bigg).
\]
By symmetry of the integrand and the domain, each term in the first sum contributes equally; 
so it suffices to consider $(d-1)$ times the last term. 
Similarly, for the second sum, each term contributes equally, giving $\binom{d-1}{2}$ times the last term. 
Hence, the integrand can be replaced by
\begin{align*}
\frac{d}{d-1} \, 2 f_{d-1} 
  \Big(
    (d-1) \prod_{k=1}^{d-2} x_k^2
  + 4 \binom{d-1}{2} x_{d-2} x_{d-1} \prod_{l=1}^{d-3} x_l^2
  \Big)
= \\
=2d f_{d-1} \prod_{k=1}^{d-2} x_k^2 + 4 d(d-2) f_{d-1} x_{d-2} x_{d-1} \prod_{l=1}^{d-3} x_l^2.
\end{align*}
After substituting $f_{d-1}$, the two resulting integrals are
\[
2d \int_{\Delta^{d-2}} x_1^2 \cdots x_{d-2}^2 
    \prod_{1 \le i < j \le d-1} (x_i - x_j)^2 \, d\lambda^{d-2},
\]
and
\[
4d(d-2) \int_{\Delta^{d-2}} x_1 \cdots x_{d-3} 
    \prod_{i=1}^{d-1} x_i \prod_{1 \le i < j \le d-1} (x_i - x_j)^2 \, d\lambda^{d-2}.
\]
These integrals have the form of the integrals in Proposition~\ref{prop:main} in Appendix~\ref{appendix:selberg}, 
with the corresponding parameters. Thus the sum becomes
\begin{align*}
2d\, I_{\mathrm{Selberg}}(d-1,1,1,d-2,d-2) + 4d(d-2)\, I_{\mathrm{Selberg}}(d-1,2,1,0,d-3)
=  \\
=2d \Bigg(
    \frac{\Gamma(d+2)}{6 \, \Gamma(d^2-3)} \prod_{i=1}^{d} \Gamma(i)^2
  + 2(d-2) \frac{\Gamma(d+1)}{6 \, \Gamma(d^2-3)} \prod_{i=1}^{d} \Gamma(i)^2
  \Bigg)=  \\
= d(d-1) \frac{\Gamma(d+1)}{\Gamma(d^2-3)} \prod_{i=1}^{d} \Gamma(i)^2.
\end{align*}
Substituting this result into~\eqref{eq:V3} gives the desired formula for~\eqref{eq:main2}.

\end{proof}

\section{Complementarity polytope}\label{sec:complementarity}


Suppose the vectors $b_1,\ldots b_d$ form an orthonormal basis of $\mathbb C^d$, and let $P_k:=\ket{b_k}\bra{b_k}$ (i.e.\ the rank one orthogonal projection onto the one-dimensional subspace given by $b_k$) for $k=1,\ldots d$. Then the projections $P_k$ $(k=1,\ldots d)$, form a regular $d-1$-dimensional simplex in the state space $\mathbb S(\mathbb C^d)$, centered at $\frac{1}{d}I$ and having a (Hilbert-Schmidt) edge-length of $\sqrt{2}$.
For this reason, we shall consider an ``abstract'' $d-1$-dimensional simplex (i.e.\ one given in a Euclidean space, without reference to a concrete basis of $\mathbb C^d$). More concretely, let $\DeltaB^{d-1}$ denote the origin-centered $(d-1)$-dimensional simplex with edge length $\sqrt{2}$, given as the convex hull of the vectors
\[
v_1=\begin{bmatrix}
    -r_2 \\ -r_3 \\ -r_4 \\ \vdots \\ -r_{d}
\end{bmatrix}, 
v_2=\begin{bmatrix}
    R_2 \\ -r_3 \\ -r_4 \\ \vdots \\ -r_{d}
\end{bmatrix}, 
v_3=\begin{bmatrix}
    0 \\ R_3 \\ -r_4 \\ \vdots \\ -r_{d}
\end{bmatrix}, 
\cdots ,
v_{d-1}=\begin{bmatrix}
    0 \\ \vdots \\ 0 \\ R_{d-1} \\ -r_{d}
\end{bmatrix}, 
v_d=\begin{bmatrix}
    0 \\ \vdots \\ 0 \\ 0 \\ R_{d}
\end{bmatrix}
\]
where $r_d = 1/\sqrt{d(d-1)}$ and $R_d = \sqrt{(d-1)/d}$ denote the radii of the inscribed and circumscribed spheres, respectively. (We use the bold symbol $\DeltaB^{d-1}$ for this simplex to avoid confusion with the standard simplex $\Delta^{d-1}$.) 

Now let us define the complementarity polytope $\mathcal{P}_d \subset \mathbb{R}^{(d+1)d}$ as the convex hull of $d+1$ mutually orthogonal $(d-1)$-simplices, each centered at the origin. 
Let $\{e_1, e_2, \dots, e_{d+1}\} \subset \mathbb{R}^{d+1}$ denote the standard orthonormal basis of $\mathbb{R}^{d+1}$. 
Then the $i$-th simplex is given by
\[
\DeltaB^{d-1}_i := \operatorname{conv}\{\, e_i \otimes v_j \mid j \in [d] \,\} 
    \subset \mathbb{R}^{d-1}_i := e_i \otimes \mathbb{R}^{d-1},
\]
and the complementarity polytope itself is
\[
\mathcal{P}_d 
    := \operatorname{conv}\!\left( \bigcup_{i=1}^{d+1} \DeltaB^{d-1}_i \right) 
    = \operatorname{conv}\!\left( \bigcup_{i=1}^{d+1} \{\, e_i \otimes v_j \mid j \in [d] \,\} \right) 
    \subset \mathbb{R}^{d+1} \otimes \mathbb{R}^{d-1} .
\]
Note that $\mathbb{R}^{d+1} \otimes \mathbb{R}^{d-1} \cong \mathbb{R}^{d-1}_1 \oplus \cdots \oplus \mathbb{R}^{d-1}_{d+1}  \cong \mathbb{R}^{d^2 - 1}$.
We denote a face of the $i$-th simplex by specifying the set of omitted vertices $S_i \subseteq [d]$:
\[
\DeltaB^{d-1}_{i, S_i} := \operatorname{conv}\{\, e_i \otimes v_j \mid j \in [d] \setminus S_i \,\}.
\]
Using this notation, a face of $\mathcal{P}_d$ can be written as
\[
F(S_1, \dots, S_{d+1}) := 
    \operatorname{conv}\!\left( \bigcup_{i=1}^{d+1} \DeltaB^{d-1}_{i, S_i} \right).
\]
Due to the high degree of symmetry in the construction, all facets (i.e., faces of dimension $D-1 = d^2 - 2$) of $\mathcal{P}_d$ are of the same type, up to coordinate permutations and orthogonal transformations. 
In particular, each facet is isometric to
\[
F_{D-1} := F(\{1\}, \{1\}, \dots, \{1\}) 
    = \operatorname{conv}\!\left( \bigcup_{i=1}^{d+1} \DeltaB^{d-1}_{i, \{1\}} \right).
\]
Similarly, each $(D-2)$-dimensional face is isometric to
\[
\begin{aligned}
F_{D-2} 
    &:= F(\{1,2\}, \{1\}, \dots, \{1\}) \\
    &= \operatorname{conv}\!\left( 
        \DeltaB^{d-1}_{1, \{1,2\}} 
        \cup 
        \bigcup_{i=2}^{d+1} \DeltaB^{d-1}_{i, \{1\}} 
    \right)
    = \bigcap_{i=1}^2 F(\{i\}, \{1\}, \dots, \{1\}).
\end{aligned}
\]
However, there are two distinct types of $(D-3)$-dimensional faces:
\[
F_{D-3}^{(1)} := F(\{1,2,3\}, \{1\}, \dots, \{1\}) 
    = \bigcap_{i=1}^3 F(\{i\}, \{1\}, \dots, \{1\}),
\]
and
\[
F_{D-3}^{(2)} := F(\{1,2\}, \{1,2\}, \{1\}, \dots, \{1\}) 
    = \bigcap_{\substack{i=1 \\ j=1}}^2 F(\{i\}, \{j\}, \{1\}, \dots, \{1\}).
\]
In the derivations above, we have used the following identity, valid for all $k \in \mathbb{N}$:
\[
F\!\left( 
    \bigcup_{i=1}^k S_1^i, 
    \bigcup_{i=1}^k S_2^i, 
    \dots, 
    \bigcup_{i=1}^k S_{d+1}^i 
\right)
= 
\bigcap_{i=1}^k F(S_1^i, S_2^i, \dots, S_{d+1}^i).
\]

\begin{proof}[Proof of Theorem~\ref{th:P}]

First, we prove equation~\eqref{eq:P1}. 
Since $\mathcal{P}_d$ is a convex polytope, we can apply formula~\eqref{eq:VPoly}. 
(Note that, since we are calculating the \emph{unnormalized} intrinsic volume, it is not necessary to divide by $\operatorname{vol}_{2}(B^2)$.) 
As there is only one type of $(D-2)$-dimensional face—each is isomorphic to $F_{D-2}$—the sum simplifies to
\begin{equation}\label{eq:PV3}
\widetilde{V}_{D-2}(\mathcal{P}_d) 
    = f_{D-2}(\mathcal{P}_d)\,
      \operatorname{vol}_{D-2}(F_{D-2}) \cdot 
      \operatorname{vol}_{2}\!\big(N_{\mathcal{P}}(F_{D-2}) \cap B^2\big),
\end{equation}
where the first factor is the number of $(D-2)$-dimensional faces:
\begin{equation}\label{eq:PV3f}
f_{D-2}(\mathcal{P}_d) = (d+1)\binom{d}{2} d^d.
\end{equation}
To compute the second factor, the volume of the face $F_{D-2}$, we proceed indirectly by considering the cone formed by this face and the origin. 
The volume of this cone can be evaluated as 
(from now on, the notation ``$=\dots=$'' indicates omitted intermediate steps; detailed calculations can be found in Appendix~\ref{appendix:cpolytope}):
\[
\operatorname{vol}_{D-1}\!\big(\operatorname{conv}(\{0\} \cup F_{D-2})\big)
    = \dots = 
    \frac{\sqrt{2}}{(d^2 - 2)!} \frac{1}{d^{(d+1)/2}}.
\]
The corresponding height, i.e., the distance between the face and the origin, is
\[
\operatorname{dist}(0, F_{D-2})
    = \dots = 
    \frac{\sqrt{2}}{\sqrt{d(2d^2 - d - 2)}}.
\]
Using the standard formula for the volume of a cone in $\mathbb{R}^d$,
\[
\vol_d(\mathrm{cone}) = \frac{1}{d} \vol_{d-1}(\mathrm{\mathrm{base}}) \cdot \mathrm{height} ,
\]
we obtain
\begin{equation}\label{eq:PV3F}
\operatorname{vol}_{D-2}(F_{D-2}) 
    = (D-1)\,
      \frac{\operatorname{vol}_{D-1}\!\big(\operatorname{conv}(\{0\} \cup F_{D-2})\big)}
           {\operatorname{dist}(0, F_{D-2})}
    = \frac{1}{(d^2 - 3)!}\,
      \frac{\sqrt{2d^2 - d - 2}}{d^{d/2}}.
\end{equation}
The third factor is the two-dimensional volume of the normal cone:
\begin{equation}\label{eq:PV3N}
\operatorname{vol}_{2}\!\big(N_{\mathcal{P}}(F_{D-2}) \cap B^2\big)
    = \dots = 
    \frac{1}{2}\, \arccos\!\left(1 - \frac{d}{d^2 - 1}\right).
\end{equation}
Substituting the three factors~\eqref{eq:PV3f}, \eqref{eq:PV3F}, and~\eqref{eq:PV3N} into~\eqref{eq:PV3}, 
we obtain the first statement~\eqref{eq:P1} of Theorem~\ref{th:P}.

Now we prove equation~\eqref{eq:P2}. 
We follow the same approach as before; the only difference is that now there are two distinct types of $(D\!-\!3)$-dimensional faces (each isomorphic to either $F_{D-3}^{(1)}$ or $F_{D-3}^{(2)}$), 
so the sum in~\eqref{eq:VPoly} simplifies as
\begin{equation}\label{eq:PV4}
\widetilde{V}_{D-3}(\mathcal{P}_d)
  = \sum_{j=1}^2 
    f_{D-3}^{(j)}(\mathcal{P}_d)\,
    \operatorname{vol}_{D-3}\!\big(F_{D-3}^{(j)}\big)
    \cdot 
    \operatorname{vol}_{3}\!\big(N_{\mathcal{P}}(F_{D-3}^{(j)}) \cap B^3\big).
\end{equation}
The first factors are the numbers of $(D\!-\!3)$-dimensional faces of each kind:
\begin{align}\label{eq:PV4f}
\begin{split}
f_{D-3}^{(1)}(\mathcal{P}_d) &= (d+1)\binom{d}{3}d^d, \\
f_{D-3}^{(2)}(\mathcal{P}_d) &= \binom{d+1}{2} d^{d-1} \binom{d}{2}\binom{d}{2}.
\end{split}
\end{align}
The second factors can be computed by the same approach as before, 
using the volumes of the cones formed by each face and the origin:
\begin{align*}
\operatorname{vol}_{D-2}\!\big(\conv(\{0\}\cup F_{D-3}^{(1)})\big)
  &= \dots = \frac{\sqrt{3}}{(d^2-3)!} \frac{1}{\sqrt{d}^{\,d+1}},  \\
\operatorname{vol}_{D-2}\!\big(\conv(\{0\}\cup F_{D-3}^{(2)})\big)
  &= \dots = \frac{2}{(d^2-3)!} \frac{1}{\sqrt{d}^{\,d+1}}.
\end{align*}
The corresponding heights of these cones, i.e., the distances between the faces and the origin, are
\begin{align*}
\operatorname{dist}(0,F_{D-3}^{(1)}) &= \dots = \frac{1}{\sqrt{d\!\left(d^2 - \tfrac{2}{3}d - 1\right)}}, \\
\operatorname{dist}(0,F_{D-3}^{(2)}) &= \dots = \frac{1}{\sqrt{d(d^2 - d - 1)}}.
\end{align*}
Hence,
\begin{align}\label{eq:PV4F}
\begin{split}
\operatorname{vol}_{D-3}\!\big(F_{D-3}^{(1)}\big)
  &= (D-2)\,
     \frac{\operatorname{vol}_{D-2}\!\big(\conv(\{0\}\cup F_{D-3}^{(1)})\big)}
          {\operatorname{dist}(0,F_{D-3}^{(1)})}
     = \frac{\sqrt{3}}{(d^2-4)!}
       \frac{\sqrt{d\!\left(d^2-\tfrac{2}{3}d-1\right)}}{\sqrt{d}^{\,d+1}}, \\[4pt]
\operatorname{vol}_{D-3}\!\big(F_{D-3}^{(2)}\big)
  &= (D-2)\,
     \frac{\operatorname{vol}_{D-2}\!\big(\conv(\{0\}\cup F_{D-3}^{(2)})\big)}
          {\operatorname{dist}(0,F_{D-3}^{(2)})}
     = \frac{2}{(d^2-4)!}
       \frac{\sqrt{d(d^2-d-1)}}{\sqrt{d}^{\,d+1}}.
\end{split}
\end{align}
The third factors are the three-dimensional volumes of the normal cones:
\begin{align}\label{eq:PV4N}
\begin{split}
\operatorname{vol}_{3}\!\big(N_{\mathcal{P}}(F_{D-3}^{(1)}) \cap B^3\big)
  &= \dots = 
     \frac{4}{3}\,
     \arctan\!\left(
       \sqrt{
         \tan\!\left(\tfrac{3\alpha}{4}\right)
         \tan^3\!\left(\tfrac{\alpha}{4}\right)
       }
     \right), \\[4pt]
\operatorname{vol}_{3}\!\big(N_{\mathcal{P}}(F_{D-3}^{(2)}) \cap B^3\big)
  &= \dots = 
     \frac{8}{3}\,
     \arctan\!\left(
       \sqrt{
         \tan\!\left(\tfrac{\alpha}{2}+\tfrac{\beta}{4}\right)
         \tan\!\left(\tfrac{\alpha}{2}-\tfrac{\beta}{4}\right)
         \tan^2\!\left(\tfrac{\beta}{4}\right)
       }
     \right),
\end{split}
\end{align}
where 
\[
\alpha = \arccos\!\left(1 - \frac{d}{d^2 - 1}\right), 
\qquad 
\beta  = \arccos\!\left(1 - \frac{2d}{d^2 - 1}\right).
\]
Substituting the three factors~\eqref{eq:PV4f}, \eqref{eq:PV4F}, and~\eqref{eq:PV4N} into~\eqref{eq:PV4}, 
we obtain the second statement~\eqref{eq:P2} of Theorem~\ref{th:P}.

\end{proof}

\section{Example}\label{sec:example}

In the examples that follow, instead of a finite collections of vectors $v_1,\ldots v_n$ of a Euclidean space, we consider certain compact, continuously parameterized families of vectors that satisfy the same “trivial requirements’’ as discussed in the preliminaries. This slight abstraction is made for simplicity: in the continuous setting, the associated convex bodies -- such as intersections of Euclidean balls with coordinate orthants -- have volumes (and relevant intrinsic volumes) that can be computed more easily. Geometrically, this amounts to replacing the discrete polytopes considered before by convex bodies of simpler analytic description. By the continuity of intrinsic volumes under Hausdorff convergence, any such continuous configuration can be approximated arbitrarily well by a finite collection of vectors whose associated polytope would likewise be excluded by the same intrinsic-volume comparison. Hence, although we do not exhibit explicit finite configurations, the examples below genuinely represent the finite “would-be’’ configurations anticipated in the introduction.


For a (fixed) $d, D\equiv d^2-1$ and $k=0,1,2,3$, we consider the (four different) collection of vectors
\[
\Omega_{d,D-k} := \{\, v \in \mathbb{R}^{D-k} \mid \forall i: v_i \ge 0, \,\, \|v\| = R_d \,\},
\]
where \( R_d = \sqrt{(d-1)/d} \). In other words, \( \Omega_{d,D-k} \) is the intersection of the positive orthant with the sphere of radius \( R_d \) and origin at zero. 

It is immediate that the set \( \Omega_{D-k} \) satisfies the listed trivial requirements. Indeed, these vectors have the right length, the inner product of any two elements of \( \Omega_{d,D-k} \) is greater than or equal to $\frac{1}{d}$ -- it is actually always non-negative, as these vectors are all from one coordinate orthant -- and the span of these vectors is $(D-k)$-dimensional, which is indeed smaller than or equal to $d^2-1$. Therefore, by Lemma~3, the existence of a corresponding set of unit vectors \(v \mapsto \psi_v\)  in \( \mathbb{C}^d \) such that
\[
|\langle \psi_v, \psi_w \rangle|^2 = \langle v, w \rangle + \frac{1}{d}, \qquad (v,w \in \Omega_{D-k})
\]
is equivalent to asking whether the \emph{spherical cone}
\[
C_{d,D-k} := \mathrm{conv}(\{0\} \cup \Omega_{d,D-k}) = \{\,
v \in \mathbb{R}^{D-k} \mid \forall i: v_i \ge 0, \,\, \|v\| \leq R_d \,\} \]
can be (isometrically) inscribed into \( \mathcal{S}_d \) in such a way that the point \( 0 \in C_{d,D-k} \) is mapped to \( \frac{1}{d}I \in \mathcal{S}_d \). (Note that although Lemma~3 was formally stated for a finite collection of vectors, but in fact nowhere is used the finiteness in its proof.) We will show that in case $d=6$, such an inscription is impossible, comparing the intrinsic volumes \( V_{D-k}(C_{d,D-k}) \) with the (newly derived) intrinsic volumes \( V_{D-k}(\mathcal{S}_d) \).

Since \( C_{d,D-k} \) is a \((D-k)\)-dimensional convex body, its first nonzero intrinsic volume is its \((D-k)\)-dimensional volume, namely
\[
V_{D-k}(C_{d,D-k}) = \mathrm{vol}_{d,D-k}(C_{D-k}) = \frac{\chi_{D-k} R_d^{D-k}}{2^{D-k}}
= \frac{\left( \frac{\pi}{4}\frac{d-1}{d}\right)^{\frac{d^2-1-k}{2}}}{\Gamma\!\left(\frac{d^2-1-k}{2}+1\right)}
.
\]

Let us fix now $d$ to be $6$ and consider the four cases corresponding to the convex bodies \( C_{6,35}, C_{6,34}, C_{6,33} \), and \( C_{6,32} \). The inscription of the first example \( C_{6,35} \) can be ruled out by comparing volumes, since by straightforward substitution \( V_{35}(C_{6,35}) > V_{35}(\mathcal{S}_6) \). The second example \( C_{6,34} \) cannot be excluded by volume alone, because \( V_{35}(C_{6,34}) = 0\), but it can be ruled out using its second intrinsic volume (proportional to the surface area), as \( V_{34}(C_{6,34}) > V_{34}(\mathcal{S}_6) \). The third example \( C_{6,33} \) cannot be ruled out using either of the first two intrinsic volumes, since both vanish. However, by employing the newly derived quantity \( V_{D-2}(\mathcal{S}_d) \) the inscription into the state space can be excluded, as \( V_{33}(C_{6,33}) > V_{33}(\mathcal{S}_6) \). Finally, the inscription of the fourth example \( C_{6,32} \) can only be ruled out using its fourth intrinsic volume, since the first three vanish. Using the derived formula for \( V_{D-3}(\mathcal{S}_d) \), the nonexistence can again be established because \( V_{32}(C_{6,32}) > V_{32}(\mathcal{S}_6) \).
\smallskip

\textbf{Remarks:}
\begin{itemize}
\item We made the specific choice of $d=6$. In fact, things would work out exactly in the same way for any {\it larger} dimension. (Note that with the change of $d$ also the configuration whose existence is in question, changes.) However, e.g.\ for $d=5$ we have \( V_{21}(C_{5,21}) < V_{21}(\mathcal{S}_5) \), so in five dimensions the fourth example could not had been ruled out in the explained way.
\item 
Note that the dimensions of $3$ of the four examples are strictly smaller than that of the corresponding state space. By a continuity argument, it is easy to see that examples of the same dimension as the corresponding state space with similar relation of intrinsic volumes of the considered convex body and the quantum state space also exist, although they would be considerably more difficult to present and compute.
\end{itemize}

\appendix

\section{Selberg's integral}\label{appendix:selberg}

We summarize here some integral formulas of Selberg type following \cite{mehta1990random}, which are used in the evaluation of the simplex integrals in Section~\ref{sec:statespace}.

For any positive integer $n>1$, let $dx \equiv dx_1 \cdots dx_n$, and define
\[
\Phi(x) \equiv \Phi(x_1, \dots, x_n) 
  = \prod_{1 \le i < j \le n} (x_i - x_j)^{2\gamma} \prod_{k=1}^n x_k^{\alpha-1} e^{-x_k}.
\]
Then the following integral formulas are consequences of the Selberg integral of Laguerre type.  Then, for integers $n>1$ and $1 \le m \le n$, and real parameters $\alpha > 0$ and 
$\gamma > -\min\{1/n, \, \alpha/(n-1)\}$, the following integral formulas hold:
\begin{align}\label{eq:selberg}
\int_{[0, \infty)^n} \Phi(x)\, dx 
  &= \prod_{j=0}^{n-1} \frac{\Gamma(1 + (1+j)\gamma)\, \Gamma(\alpha + j \gamma)}{\Gamma(1+\gamma)},  \nonumber \\ 
\int_{[0, \infty)^n} x_1^2 \cdots x_k^2 \, x_{k+1} \cdots x_m \, \Phi(x)\, dx 
  &= \prod_{j=1}^{k} (\alpha + 1 + \gamma (2n - m - j)) \cdot  \\
     & \ \  \cdot \prod_{j=1}^{m} (\alpha + \gamma (n - j)) 
     \int_{[0, \infty)^n} \Phi(x)\, dx.  \nonumber
\end{align}

\begin{lemma}\label{lemma:integral_simplex_exp}
Let $h$ be a homogeneous polynomial of degree $k$ in $n$ variables, and let 
$\Delta^{n-1} \subset \mathbb{R}^n$ denote the standard $(n-1)$-simplex. Define
\begin{align*}
I_\Delta &= \int_{\Delta^{n-1}} h(x) \, d\lambda^{n-1}(x), \\
I_e &= \int_{[0, \infty)^n} e^{-\sum_{i=1}^n x_i} \, h(x) \, d\lambda^n(x),
\end{align*}
where $d\lambda^{n-1}(x)$ denotes the Lebesgue measure induced on the simplex. Then
\[
I_\Delta = \frac{\sqrt{n}}{\Gamma(n + k)} \, I_e.
\]
\end{lemma}

\begin{proof}
Consider the Lipschitz function
\[
u(x) = \sum_{i=1}^n x_i
\]
on $\mathbb{R}^n$, which partitions $[0, \infty)^n$ into level sets
\[
u^{-1}(t) \cap [0, \infty)^n = t \, \Delta^{n-1} = \{ x \in [0, \infty)^n : \sum_{i=1}^n x_i = t \}.
\]
Thus, by the \emph{coarea} formula (\cite{morganGMT, federerGMT}), the integral of the $L^1$ function $e^{-u(x)} h(x)$ over $[0, \infty)^n$ can be written as
\begin{align*}
\int_{[0, \infty)^n} e^{-u(x)} h(x) \left| \nabla u(x) \right| \, d\lambda^n(x) 
&= \int_0^\infty \left( \int_{t\Delta^{n-1}} e^{-t} h(x) \, d\lambda^{n-1}(x) \right) dt \\
&= \int_0^\infty \left( t^{k+n-1} e^{-t} \int_{\Delta^{n-1}} h(x) \, d\lambda^{n-1}(x) \right) dt \\
&= I_\Delta \int_0^\infty t^{k+n-1} e^{-t} \, dt = I_\Delta \Gamma(n+k).
\end{align*}
In the second equality, we use the change of variables $x = t \tilde{x}$, $\tilde{x} \in \Delta^{n-1}$, and the homogeneity of $h$, i.e., $h(t \tilde{x}) = t^k h(\tilde{x})$, together with the scaling of the volume element $d\lambda^{n-1}(x) = t^{n-1} d\lambda^{n-1}(\tilde{x})$. Since $|\nabla u(x)| = \sqrt{n}$, the left-hand side equals $\sqrt{n} \, I_e$, which proves the lemma.
\end{proof}

\begin{proposition}\label{prop:main}
Let $\Delta^{n-1} \subset \mathbb{R}^n$ denote the standard $(n-1)$-simplex, and let $d\lambda^{n-1}(x)$ denote the induced Lebesgue measure. Then
\begin{align*}
&I_{\mathrm{Selberg}}(n,\alpha,\gamma,k,m) :=\!\int_{\Delta^{n-1}} \! x_1^2,\ldots x_k^2 x_{k+1} \ldots x_m \prod_{i=1}^n x_i^{\alpha-1} \prod_{i<j}^n (x_i-x_j)^{2\gamma} d\lambda^{n-1}(x)= \\
&=\sqrt{n} \frac{\prod_{j=1}^k (\alpha+1+\gamma(2n-m-j)) \prod_{j=1}^m (\alpha+\gamma(n-j))}{\Gamma (m+k+n(\alpha+\gamma(n-1))}  \prod_{j=0}^{n-1} \frac{\Gamma (1+\gamma (j+1) \Gamma (\alpha + \gamma j))}{\Gamma (1+\gamma)}.
\end{align*}
\end{proposition}

\begin{proof}
The integrand is homogeneous of degree $m + k + n(\alpha - 1 + \gamma(n-1))$. By Lemma~\ref{lemma:integral_simplex_exp}, the integral reduces to a constant multiple of the corresponding Selberg-type integral. Applying the known evaluation of Selberg integrals~\eqref{eq:selberg} then yields the stated formula.
\end{proof}

\section{Detailed calculations for Theorem 2} \label{appendix:cpolytope}
We now detail the determinant-based calculations used to obtain the cone volumes in Theorem~\ref{th:P}. 
Specifically, we compute the volume of the cone formed by a particular face of $\mathcal{P}_d$ and the origin.

\medskip

The vertices of the facet $\DeltaB^{d-1}_{\{1\}}$, expressed as column vectors, form the following 
$(d\!-\!1)\times(d\!-\!1)$ matrix:
\[
M_1 :=
\begin{pmatrix}
v_2\;\big|\;v_3\;\big|\;\cdots\;\big|\;v_d
\end{pmatrix}
=
\begin{pmatrix}
  R_2      & 0        & 0      & \cdots & 0      \\[6pt]
  -r_3     & R_3      & 0      & \cdots & 0      \\[6pt]
  -r_4     & -r_4     & R_4    & \ddots & \vdots \\[3pt]
  \vdots   & \vdots   & \ddots & \ddots & 0      \\[3pt]
  -r_{d}   & -r_{d}   & \cdots & -r_{d} & R_{d}
\end{pmatrix}.
\]
Let $M_k = M_{k:d,\,k:d}$ denote the $(d\!-\!k)\times(d\!-\!k)$ submatrix of $M_1$ obtained by deleting the first $k$ rows and columns. 
Its determinant satisfies
\[
\det M_k = \prod_{j=k+1}^{d} R_j = \sqrt{\frac{k}{d}}.
\]

\medskip

The face $F_{D-2}$ is spanned by the column vectors of the block-diagonal matrix 
$\operatorname{diag}(M_2, M_1, \dots, M_1)$. 
Hence, the volume of the cone over this face is
\[
\operatorname{vol}_{D-1}\!\big(\conv(\{0\}\cup F_{D-2})\big)
= \frac{1}{(D-1)!}\, (\det M_2)(\det M_1)^d
= \frac{\sqrt{2}}{(d^2-2)!} \frac{1}{\sqrt{d}^{\,d+1}}.
\]
Similarly, the faces $F_{D-3}^{(1)}$ and $F_{D-3}^{(2)}$ are spanned by the column vectors of 
$\operatorname{diag}(M_3, M_1, \dots, M_1)$ 
and 
$\operatorname{diag}(M_2, M_2, M_1, \dots, M_1)$, 
respectively. 
Their corresponding cone volumes are:
\[
\operatorname{vol}_{D-2}\!\big(\conv(\{0\}\cup F_{D-3}^{(1)})\big)
= \frac{1}{(D-2)!}\, (\det M_3)(\det M_1)^d
= \frac{\sqrt{3}}{(d^2-3)!} \frac{1}{\sqrt{d}^{\,d+1}},
\]
\[
\operatorname{vol}_{D-2}\!\big(\conv(\{0\}\cup F_{D-3}^{(2)})\big)
= \frac{1}{(D-2)!}\, (\det M_2)^2(\det M_1)^{d-1}
= \frac{2}{(d^2-3)!} \frac{1}{\sqrt{d}^{\,d+1}}.
\]
Before calculating the distances between specific faces of $\mathcal{P}_d$ and the origin, 
we begin with two auxiliary lemmas.

\begin{lemma}\label{lemma:closest}
Let $V_1, \dots, V_k$ be mutually orthogonal subspaces of $\mathbb{R}^d$, 
and let $C_i \subset V_i$ be closed convex sets such that $0 \notin C_i$ for all $i \in [k]$. 
Then the point in 
\[
C := \conv(C_1 \cup \cdots \cup C_k)
\]
closest to the origin is given by
\[
x = \sum_{i=1}^k \lambda_i x_i,
\qquad 
\lambda_i = \frac{1 / \|x_i\|^2}{\sum_{j=1}^k 1 / \|x_j\|^2},
\]
where $x_i \in C_i$ denotes the point in $C_i$ closest to the origin. 
Moreover,
\[
\frac{1}{\|x\|^2} = \sum_{j=1}^k \frac{1}{\|x_j\|^2}.
\]
\end{lemma}

\begin{proof}
Since the subspaces $V_i$ are mutually orthogonal, any point 
$y \in C$ that is a convex combination of points $y_i \in C_i \subset V_i$ satisfies
\[
\norm{\sum \lambda_i y_i}^2=\sum_i \lambda_i^2 \norm{y_i}^2.
\]
Therefore, the point in $C$ closest to the origin must be a convex combination 
of the individual points $x_i \in C_i$ that are closest to the origin, 
that is, $x = \sum_i \lambda_i x_i$. 
The problem thus reduces to finding the point in 
$\conv\{x_1, \dots, x_k\} \subset \mathbb{R}^k$ closest to the origin. Since all $x_i$ lie in mutually orthogonal directions, this convex hull 
lies in the affine hyperplane
\[
H := \left\{\, y \in \mathbb{R}^k : \langle v, y \rangle = 1 \,\right\},
\qquad 
v := \sum_{j=1}^k \frac{x_j}{\|x_j\|^2},
\]
whose normal vector is $v$. 
The closest point on this hyperplane to the origin 
is the orthogonal projection of the origin onto $H$, 
which is given by $x = v / \|v\|^2$. 
This yields the stated expressions for $x$ and $\|x\|$, completing the proof.
\end{proof}
We next recall a simple geometric fact about the standard simplex, 
which will be used to compute the distances between faces of $\mathcal{P}_d$ and the origin.

\begin{lemma}\label{lemma:facedist}
Let $\Delta^{d-1}$ be the standard $(d-1)$-dimensional simplex in $\mathbb{R}^d$. 
Then the Euclidean distance between the centroid of $\Delta^{d-1}$ 
and the centroid of any $(d-1-k)$-dimensional face is
\[
r(d, k) = \sqrt{\frac{k}{d(d - k)}}.
\]
In particular, for $k = 1$ and $k = d - 1$, this recovers the radii 
of the inscribed and circumscribed spheres of the simplex, respectively.
\end{lemma}
\begin{proof}
Let $\Delta^{d-1}$ denote the standard simplex in $\mathbb{R}^d$ with vertices given by the standard basis vectors $e_1, \dots, e_d$. The centroid of the full simplex is
\[
c=\frac{1}{d}\sum_{i=1}^d e_i,
\]
and, without loss of generality, the centroid of the $(d-1-k)$-face determined by $e_1,\dots,e_{d-k}$ is
\[
p=\frac{1}{d-k}\sum_{i=1}^{d-k} e_i.
\]
A straightforward computation yields
\[
\|c-p\|^2=\frac{k}{d(d-k)},
\]
which proves the claim.
\end{proof}
Now consider the face
\[
F_{D-2}= \operatorname{conv}\!\Big( \DeltaB^{d-1}_{1, \{1,2\}} \cup \bigcup_{i=2}^{d+1} \DeltaB^{d-1}_{i, \{1\}} \Big),
\]
which is the convex hull of orthogonal faces of simplices. Applying 
Lemmas~\ref{lemma:closest} and~\ref{lemma:facedist} to compute the distance from the origin yields
\[
\operatorname{dist}(0,F_{D-2})^2
= \frac{1}{d\cdot \frac{1}{r_d^2} + \frac{1}{r(d,2)^2}}
= \frac{2}{d(2d^2-d-2)}.
\]
Similarly,
\begin{align*}
\dist(0,F_{D-3}^{(1)})^2&=\frac{1}{d \frac{1}{r_d^2} + \frac{1}{r(d,3)^2}}  = \frac{1}{d(d^2-\frac{2}{3}d-1)}, \\
\dist(0,F_{D-3}^{(2)})^2&=\frac{1}{(d-1) \frac{1}{r_d^2} + 2\frac{1}{r(d,2)^2}} =\frac{1}{d(d^2-d-1)}.
\end{align*}
We now calculate the volumes of the normal cones corresponding to the relevant faces. Let $u_i$ denote the unit outward normal to the facet $\DeltaB_{\{i\}}^{d-1}$ of the simplex $\DeltaB^{d-1}$. Then
\[
\langle u_i,u_i\rangle=1,\qquad \langle u_i,u_j\rangle=\frac{1}{1-d}\quad (i\ne j).
\]
Let $w(j_1,\dots,j_{d+1})$ denote the outer unit normal vector of the facet $F(\{j_1\},\dots,\{j_{d+1}\})$ of $\mathcal{P}_d$, where $j_i\in [d]$. Then the relevant normals (up to permutation) are
\begin{align*} 
w(1,1,\dots,1)&:=\frac{1}{\sqrt{d+1}} \sum_{i=1}^{d+1} e_i \otimes u_1,\\
w(2,1,\dots,1)&:=\frac{1}{\sqrt{d+1}}\Big(e_1\otimes u_2 + \sum_{i=2}^{d+1} e_i\otimes u_1\Big),\\
w(3,1,\dots,1)&:=\frac{1}{\sqrt{d+1}}\Big(e_1\otimes u_3 + \sum_{i=2}^{d+1} e_i\otimes u_1\Big),\\
w(1,2,1,\dots,1)&:=\frac{1}{\sqrt{d+1}}\Big(e_1\otimes u_1 + e_2\otimes u_2 + \sum_{i=3}^{d+1} e_i\otimes u_1\Big),\\
w(2,2,1,\dots,1)&:=\frac{1}{\sqrt{d+1}}\Big(\sum_{i=1}^{2} e_i\otimes u_2 + \sum_{i=3}^{d+1} e_i\otimes u_1\Big).
\end{align*}
A straightforward inner-product computation yields
\begin{align*} 
\braket{w(1,1,\dots,1),w(2,1,\dots,1)}&=\frac{1}{d+1}(\braket{u_1,u_2}+d\braket{u_1,u_1})=1-\frac{d}{d^2-1} ,\\
\braket{w(1,1,\dots,1),w(2,2,\dots,1)}&=\frac{1}{d+1}(2\braket{u_1,u_2}+(d-1)\braket{u_1,u_1})=1-\frac{2d}{d^2-1}.
\end{align*}
Denote the corresponding angles by
\[
\alpha=\arccos\!\Big(1 - \frac{d}{d^2 - 1}\Big),\qquad 
\beta=\arccos\!\Big(1 - \frac{2d}{d^2 - 1}\Big).
\]

The normal cone $N_{\mathcal{P}}(F_{D-2})\cap B^2$ is generated by the first two normal vectors above and hence spans an angle $\alpha$. Its two-dimensional spherical volume (i.e. the spherical arc length on the unit circle in the normal plane) is
\[
\operatorname{vol}_{2}\!\big(N_{\mathcal{P}}(F_{D-2})\cap B^2\big) = \tfrac{1}{2}\,\alpha.
\]

The normal cone $N_{\mathcal{P}}(F_{D-3}^{(1)})\cap B^3$ is generated by the first three normal vectors and corresponds to a (spherical) equilateral triangle on the unit sphere with side lengths all equal to $\alpha$. Using L'Huilier's formula for the area \(A\) of a spherical triangle with side-lengths \(a,b,c\) (on the unit sphere),
\[
\tan\!\left(\frac{A}{4}\right)
= \sqrt{ \tan\!\left(\frac{s}{2}\right) \tan\!\left(\frac{s-a}{2}\right) \tan\!\left(\frac{s-b}{2}\right) \tan\!\left(\frac{s-c}{2}\right) },
\qquad s=\frac{a+b+c}{2},
\]
and taking \(a=b=c=\alpha\), we obtain
\[
\operatorname{vol}_{3}\!\big(N_{\mathcal{P}}(F_{D-3}^{(1)})\cap B^3\big)
= \frac{4}{3}\,\arctan\!\left(\sqrt{\tan\!\left(\tfrac{3\alpha}{4}\right)\tan^3\!\left(\tfrac{\alpha}{4}\right)}\right).
\]
Finally, the normal cone $N_{\mathcal{P}}(F_{D-3}^{(2)})\cap B^3$ is generated by normals of the mixed type and forms a spherical quadrilateral which can be decomposed into two spherical triangles with side lengths \(\alpha\) and \(\beta\). Applying L'Huilier's formula to each triangle and combining the results yields
\[
\operatorname{vol}_{3}\!\big(N_{\mathcal{P}}(F_{D-3}^{(2)})\cap B^3\big)
= \frac{8}{3}\,\arctan\!\left(\sqrt{\tan\!\Big(\tfrac{\alpha}{2}+\tfrac{\beta}{4}\Big)\,\tan\!\Big(\tfrac{\alpha}{2}-\tfrac{\beta}{4}\Big)\,\tan^2\!\Big(\tfrac{\beta}{4}\Big)}\right).
\]

\section*{Acknowledgment}

We thank László Szilágyi for his handy and efficient implementation of the program~\cite{our_code}, which enabled us to verify our formulas to several decimal places across multiple dimensions and may prove useful for further investigations. We are also grateful to Márton Naszódi for introducing us to the concept of intrinsic volumes. Finally, we thank the organizers of the Hadamard 2025 conference in Seville, where our results were first presented.

\bibliographystyle{amsalpha} 
\bibliography{references}
\end{document}